\newcolumntype{d}[1]{D{.}{.}{#1}}
\newcommand{\GG}[1]{}
\newcommand\mc[1]{\multicolumn{1}{c}{#1}}
\newcommand{\FBetaAlpha}{(\mathbf{f},\boldsymbol{\beta},\alpha)}
\newcommand{\FBetaAlphaMUL}{(\mathbf{f},\boldsymbol{\beta}^m,\boldsymbol{\alpha})}
\newcommand{\FBetaAlphaMULhat}{(\hat{\mathbf{f}},\hat{\boldsymbol{\beta}^m},\hat{\boldsymbol{\alpha}})}
\newcommand{\FBetaAlphahat}{(\hat{\mathbf{f}},\hat{\boldsymbol{\beta}},\hat{\alpha})}
\theoremstyle{thmstyleone}%
\newtheorem{theorem}{Theorem}
\theoremstyle{thmstyletwo}%
\theoremstyle{thmstylethree}%
\begin{document}

\title[Article Title]{Generalized dynamic functional principal component analysis}


\author[1,2]{\fnm{Tzung-Hsuen} \sur{Khoo}}\email{tzung-hsuen.khoo@univ-lille.fr}

\author[3]{\fnm{Issa-Mbernard} \sur{Dabo}}\email{issa.mbenard.dabo@nyu.edu}
\equalcont{These authors contributed equally to this work.}

\author*[1,4,5]{\fnm{Dharini} \sur{Pathmanathan}}\email{dharini@um.edu.my}
\equalcont{These authors contributed equally to this work.}

\author[2]{\fnm{Sophie} \sur{Dabo-Niang}}\email{sophie.dabo@univ-lille.fr}
\equalcont{These authors contributed equally to this work.}

\affil[1]{\orgdiv{Institute of Mathematical Sciences, Faculty of Science}, 
\orgname{Universiti Malaya}, 
\orgaddress{\city{Kuala Lumpur}, \postcode{50603}, \country{Malaysia}}}

\affil[2]{\orgdiv{UMR 8524 -- Laboratoire Paul Painlev\'{e}}, 
\orgname{CNRS, Inria -- Datavers, Universit\'{e} de Lille}, 
\orgaddress{\city{Lille}, \postcode{59000}, \country{France}}}

\affil[3]{\orgdiv{Mathematics, Division of Science}, 
\orgname{New York University Abu Dhabi}, 
\orgaddress{\country{United Arab Emirates}}}

\affil[4]{\orgdiv{Universiti Malaya Centre for Data Analytics}, 
\orgname{Universiti Malaya}, 
\orgaddress{\city{Kuala Lumpur}, \postcode{50603}, \country{Malaysia}}}

\affil[5]{\orgdiv{Center of Research for Statistical Modelling and Methodology, Faculty of Science}, 
\orgname{Universiti Malaya}, 
\orgaddress{\city{Kuala Lumpur}, \postcode{50603}, \country{Malaysia}}}


\abstract{
In this paper, we explore dimension reduction for functional time series. We propose a generalized dynamic functional principal component analysis (GDFPCA)
which does not rely on spectral density estimation and demonstrates strong empirical performance for both stationary and nonstationary functional time series.
We define the generalized dynamic functional principal components (GDFPCs) as static factor time series in a functional dynamic factor model and obtain their multivariate representation from a truncation of the functional dynamic factor model. Estimation is based on a least-squares reconstruction criterion and implemented via a two-step procedure for the coefficient vectors of the loading curves under a basis expansion. We establish mean-square consistency of the reconstructed functional time series under weak stationarity. Simulation studies show that GDFPCA performs comparably to dynamic functional principal component analysis (DFPCA) for stationary data, while providing improved reconstruction accuracy in nonstationary settings, where both DFPCA and functional principal component analysis (FPCA) deteriorate. Applications to real datasets support the empirical advantages observed in the simulations.
}

\keywords{dimensionality reduction, functional time series, dynamic factor models, stationarity}



\maketitle

\section{Introduction}\label{sec1}
Dynamical dimensionality reduction is central to analyzing high-dimensional multivariate time series, where nonstationarity, structural breaks, and evolving cross-sectional dependence are common. The growing availability of high-dimensional time series across disciplines, including economics and finance, environmental and climate monitoring, energy systems, and public health, highlights the need for parsimonious representations for modeling, visualization, and forecasting. Data-driven dimension reduction is particularly attractive in this setting, as it yields low-dimensional representations that adapt to data complexity without imposing restrictive parametric structure. Across many application areas, however, dimension reduction is challenged by the prevalence of non-stationarity and structural change. This motivates flexible procedures that capture dynamic features such as time-varying fluctuations and uncertainty while providing accurate low-dimensional reconstructions. To address these challenges, \cite{pena2016generalized} proposed generalized dynamic principal component analysis (GDPCA), a dimension reduction approach in which components are obtained by minimizing the loss of reconstruction, 
with the aim of 
providing a dimension reduction-based representation that does not rely on spectral density estimation of the series. Building on this idea, we develop a functional extension of GDPCA in which the dynamic principal components are defined through a curve level reconstruction. This functional representation preserves the natural ordering of the measurements along a continuum and incorporates their smooth structure, allowing reconstruction accuracy to be assessed at the level of the entire curves.

Functional data analysis (FDA) provides a framework for modeling high-dimensional datasets in which discrete temporal or spatio-temporal measurements are treated as continuous functions. Within this framework, each complete curve is viewed as a single observation \citep{Ramsay2005-RAMFDA-3} of a random variable taking values in a possibly infinite-dimensional functional space. To address this inherent complexity, functional principal component analysis (FPCA) plays a central role in dimensionality reduction by projecting these infinite-dimensional objects onto a finite-dimensional subspace that captures the dominant modes of variation.

Traditional FPCA is designed for independent functional observations; however, many functional datasets in several areas (e.g, economics) display dependence across curves, such as yield curves \citep{diebold2006forecasting, hays2012functional} and intraday price curves \citep{kokoszka2015functional}.
Furthermore, observations of such dependent data are viewed as realizations of functional time series  \citep{horvath2012inference}. In such cases, ignoring the dependence between curves can affect the results of FPCA \citep{hormann2010}. 

An expanding literature has focused on the modeling of dependent, second-order stationary functional data. The inter-curve dynamics inherent in such dependent and stationary processes are characterized by lag-specific autocovariance operators. 
Extensive works were developed for estimating the singular value decompositions of the lag-covariance function for linear stationary functional autoregressive (AR) processes \citep{mas2002,bosq2002estimation}, notably the stationary functional AR(1) process \citep{bosq2000linear}. In recent years, the study of functional time series shifted from linear processes toward more general frameworks which allow for different dependence structures. \cite{hormann2010} introduced a moment-based notion of dependence for weakly dependent functional data. \cite{horvathfts2012} estimated the long-run covariance operator which is a weighted sum of the autocovariance operators. Complementing these time-domain approaches, frequency-domain methods were subsequently developed to capture the global dynamics of functional processes across all lags and frequencies. \cite{panaretos:tavakoli:2013AOS} consider the spectral density operators of a functional time series, its Fourier transform of autocovariance operator which is analogous to the spectral density matrix in the multivariate case \citep{brillinger2001time}. Building on the concept of the spectral density operator, dynamic PCA pioneered by \cite{brillinger2001time} was extended to functional data in two closely timed contributions, which are referred to as harmonic PCA \citep{panaretos:tavakoli:2013SPA} and dynamic FPCA (DFPCA) \citep{hormann2015dynamic} respectively. \cite{panaretos:tavakoli:2013SPA} introduced a Cramér–Karhunen–Loève representation for stationary functional time series and defined harmonic PCA as a rank-$K$ truncation of this representation. 

On the other hand, \cite{hormann2015dynamic} established a functional extension of Brillinger's dynamic PCA via the construction of a functional filter of $\{X_t\}$ where the filter coefficients are bounded linear operators derived from the spectral decomposition of the spectral density operator. 
More recently, \cite{vandelft_dette_2024} developed a general framework for nonstationary function-valued processes that works with time-varying spectral density operators, providing a locally stationary frequency-domain perspective for assessing structural assumptions.
Recent works in forecasting functional time series that applied the functional extension of the dynamic PCA include \cite{gao2019high,yang2022forecasting,shang2022dynamic,martinez2023surface}.

Another approach discussed to reduce the dimension of functional time series is the dynamic factor model \citep{hays2012functional,liebl2013modeling, kokoszka2015functional,martinez2022nonparametric}. In the multivariate case, dynamic factor models use a smaller number of latent variables (factors) to explain a larger fraction of the variance in multiple time series \citep{ forni2000generalized}. 
Some important work on dynamic factor models in stationary multivariate time series includes \cite{bai2002determining, diebold2006forecasting, forni2005generalized,giannone2006vars, lam2011estimation,forni2017dynamic}. In the context of functional data, \cite{hays2012functional} introduced a functional dynamic factor model (FDFM) with an application to forecast the yield curve and estimate their model using the expectation–maximization (EM) algorithm. \cite{liebl2013modeling} proposed an alternative way to estimate the FDFM by specifying the factor loading curves as the eigenfunctions of the series’ covariance operator and taking the associated scores as the factor process. \cite{kokoszka2015functional} estimated the factor process with the least squares estimator assuming that the factor loading curves are known. \cite{martinez2022nonparametric} proposed nonparametric estimators for FDFM in stationary and nonstationary processes. Another approach to dimension reduction for functional time series was recently proposed in \cite{Chang2025}, where high-dimensional functional time series are modeled through a two-step procedure transforming the data into lower-dimensional functional subseries that are mutually uncorrelated.

 In the case of non-stationary multivariate time series, \cite{stock1988testing} investigated factors in co-integrated time series. \cite{bai2004panic} proposed a dynamic factor model for large dimensional and non-stationary panel data. \cite{pena2006nonstationary} proposed a dynamic factor model for vector time series. In work more directly connected to dynamic factor models, \cite{pena2016generalized} introduced a generalized dynamic principal component analysis (GDPCA), which is of central interest in this paper. It is a dimensionality reduction approach entirely based on data analytics. \cite{smucler2019consistency} proved that when data follow a dynamic factor model with one dynamic factor, the first generalized dynamic principal component (GDPC) converges in mean square to the common part of the factor model. 

Unlike dynamic principal component analysis \citep{brillinger2001time} and its functional variants, which seek linear combinations that maximize variance over time via the spectral density, as well as dynamic factor models based on probabilistic specifications with latent dynamic factors, GDPCA \citep{pena2016generalized} defines dynamic components as filtered linear combinations without assuming a latent model. This avoids  approximating spectral measures or assuming hidden factors and instead specifies the components based on how well they reconstruct the observed time series. 
In this paper, we develop a functional extension of this approach, termed generalized dynamic functional principal component analysis (GDFPCA), and investigate its theoretical properties and empirical behavior through simulation studies and real-data applications.

The remainder of the paper is organized as follows. In section 2, we describe the methodology of the proposed GDFPCA, as well as its computation and consistency properties. In section 3, we show the results of Monte-Carlo studies with different data-generating processes that compare GDFPCA with FPCA and DFPCA \citep{hormann2015dynamic}. In section 4, we apply GDFPCA on real data and compare the approximations with FPCA and DFPCA. Section 5 concludes with recommendations for future work.

\section{Methodology}

Consider $n$  observations $X_1,...,X_n$ of a functional time-series  $({X}_t, t\in \mathbb{Z})$, valued in $\mathcal{H} =\mathcal{L}^2(\mathcal{T})$, the space of complex square-integrable functions ($\int_{\mathcal{T}} f(u) \bar{f}(u) du$,  $ \bar{f}$ is the conjugate of $f$) on $\mathcal{T}$ (a compact set in $\mathbb{R}$), with finite (Lebesgue-) measure.
In practice, most applications are real-valued.\\
Let the inner product $ \langle \cdot, \cdot \rangle: \mathcal{H} \times \mathcal{H} \to \mathbb{R} $, for $f, g \in \mathcal{H}$:
	
	\begin{equation*}
		\langle f, g \rangle= \int_{\mathcal{T}} f(u) \bar{g}(u) du.
		\end{equation*}
	Then, $\mathcal{H}$ is a Hilbert space with respect to the scalar product $\langle\cdot, \cdot \rangle$ with $\|f\|_\mathcal{H}= \langle f,f \rangle^{1/2}$ defining a norm.

A functional variable $X \in \mathcal{H}$ possesses a covariance operator\\ $C := \mathbb{E}[(X-\mu)\otimes(X-\mu)]$ (where $\mu$ is the mean curve define by $\mu(t)=\mathbb{E}X(t)$ with $t \in  \mathcal{T} $) with kernel $c(u,v)=cov(X(u),X(v))$ ($u,v \in  \mathcal{T}$). Then,  the integral operator $C$ is defined by 
\begin{equation*}
    (C f)(u) = \int_{\mathcal{T}} c(u,v) f(v) dv, \quad f \in\mathcal{L}^2(\mathcal{T}),~ u \in \mathcal{T}.
\end{equation*}
Suppose that each  $({X}_t, t\in \mathbb{Z})$ is a weakly stationary functional variable defined on some probability space $(\Omega,\mathcal{A}, P)$:\\
(i) $\mathbb{E}(X_t(u))=\mathbb{E}(X_0(u))=\mu(u)$, $u \in \mathcal{T}$, \\
(ii) for all $h \in \mathbb{Z}$, and $u, v \in  \mathcal{T}$; $c_{{h}}(u,v):=Cov\bigg(X_h(u),X_{0}(v)\bigg)=Cov\bigg(X_{t+{h}}(u),X_t(v)\bigg)$.\\
Let $C_h$, $h\in \mathbb{Z}$ be the covariance operator corresponding to $c_h$, note that $C_0=C$.
The operator $C$ admits an eigen-decomposition,
\begin{equation*}
    C(x) = \sum_{j=1}^\infty \lambda_j \langle x,\phi_j \rangle \phi_j,
\end{equation*}
where $\lambda_{1} \geq \lambda_{2} \geq \ldots \geq 0$ are the eigenvalues of $C$ and $\{\phi_j\}_{j\geq1}$ the associated orthonormal eigenfunctions. This yields the Karhunen-Lo\`eve representation of $X$,
\begin{equation*}
    X = \sum_{j=1}^\infty \langle X,\phi_j \rangle \phi_j.
\end{equation*}

This representation does not take into account the serial dependency on the functional time-series  observations, $X_t$ \citep{hormann2015dynamic}. Without a loss of generality let us assume that $({X}_t, t\in \mathbb{Z})$ is centered. 

    Let us assume that the time series observations $X_1,...,X_n$ follow the model below:
     \begin{align}\label{eq1}
       X_{t}(u) &=Y_t(u)+ e_{t}(u),\, t=1,...,n, \\
       Y_t(u) &=\sum_{h=0}^K f_{t-h}\beta_{h}(u),
      \end{align}
    where $\{\beta_{h}\}$ are factor loading curves, $\{f_t\}$  are scalar factor time series,
    $\{e_t\}$ is a sequence of centered independent and identically distributed (i.i.d) functional random variables in $\mathcal{L}^2(\mathcal{T})$. By assuming the orthonormality of the factor loading curves, we further assume the following: \\~\\
    \textbf{Assumptions:}  \vskip 1em
\noindent \textbf{A1.} $\{e_t\}$ is uncorrelated with $\{Y_t\}$; \\
    \textbf{A2.} Let $C_{Y,h}$ and $C_{e}$ ($\{e_t\}$ is a white noise) be the covariance operators of $\{Y_t\}$ and $\{e_t\}$ respectively. Let $\lambda_j^Y$, $\lambda_j^e$ be their respective ordered j$th$ eigenvalues. Their maximum eigenvalues $\lambda_{max}^Y$ and $\lambda_{max}^e$ are finite. \\~\\ 
Motivated by the definition of multivariate GDPC \citep{pena2016generalized}, we define the first GDFPC with $K$ lags as a vector $\mathbf{f}= (f_{1-K},...,f_{n})^\top$, such that the reconstruction of the functional variable $X_{t}(.)$ as a linear combination of $(f_{t-K},...,f_{t-1},f_{t})^\top$ is optimal with respect to the mean squared error (MSE) criterion. Additionally, we choose $\mathbf{f}$ such that $\sum_{t=1-K}^nf_t=0$ and $(1/(n+K)) \sum_{t=1-K}^nf_t^2=1$ to ensure identifiability. Specifically, given a $(n+K)$-dimensional vector $\mathbf{f}$, a $(K+1)$-dimensional vector of loading curves $\boldsymbol{\beta}=(\beta_0(u),...,\beta_K(u))^\top$ and a mean function $\alpha(u)$, the reconstruction of $X_{t}(.)$ is defined as:
\begin{align}
               X_{t}^{R,b}\FBetaAlpha &= \alpha(u) + Y_t(u) \nonumber \\
               &= \alpha(u) +\sum_{h=0}^K f_{t-h}\beta_{h}(u),
            \end{align}  
where the superscripts $R$ and $b$ stand for reconstruction and backward respectively. The MSE loss function when we reconstruct the $n$ functional time series using $\mathbf{f},\boldsymbol{\beta},{\alpha}$ is:
\begin{align}\label{Equation:MSE_H}
                \text{MSE}\FBetaAlpha= \dfrac{1}{n}\sum_{t=1}^{n}\left\|(X_{t}-X_{t}^{R,b}\FBetaAlpha)\right\|^2_\mathcal{H}.
            \end{align}
The values which minimize the above function is denoted as $\FBetaAlphahat$, is then the first GDFPC, and the second GDFPC is defined as the first GDFPC of the residuals:
\begin{align}\label{residuals}
               r_{t} = X_{t} - X_{t}^{R,b}\FBetaAlphahat, \; 1\leq t\leq n,
            \end{align}    
             and higher order GDFPC are defined similarly.
             
    \subsection{Truncated GDFPC}

By letting $\{\phi_j\}$ be an orthonormal basis of $\mathcal{H}=\mathcal{L}^2(\mathcal{T})$, we expand the functional objects in equation (\ref{eq1}):
 \begin{align}
               X_{t}(u) = \sum_{j=1}^\infty\chi_{t,j}\phi_j(u);\, \\
               \beta_{h}(u) = \sum_{j=1}^\infty\beta_{h,j}\phi_j(u);\, \\  
               e_{t}(u) = \sum_{j=1}^\infty\varepsilon_{t,j}\phi_j(u).
            \end{align} 
  Then,  equation (\ref{eq1}) may be rewritten as:
\begin{align}
    \sum_{j=1}^\infty \chi_{t,j}  \phi_j(u) &=Y_t(u)+\sum_{j=1}^\infty\varepsilon_{t,j}\phi_j(u),\, t=1,...,n, \\
       Y_t(u) &=\sum_{h=0}^K \sum_{j=1}^\infty f_{t-h}\beta_{h,j}\phi_j(u).
            \end{align}
Let us consider a truncation of the previous equation,
\begin{align}\label{approx_X_t}
   X_t(u) \approx \sum_{j=1}^m \chi_{t,j}  \phi_j(u) &=\sum_{h=0}^K \sum_{j=1}^m f_{t-h}\beta_{h,j}\phi_j(u) \nonumber\\&+\sum_{j=1}^m\varepsilon_{t,j}\phi_j(u),\, t=1,...,n.
            \end{align}
      We assume that the same truncated dimension $m=m_n=o(n)$ across $h$, $m$ goes to $\infty$ with $n$.
We derive 
the truncated reconstruction
\begin{align}
               X_{t}^{R,b}\FBetaAlpha &\approx  \sum_{j=1}^m \alpha_j \phi_j(u) \nonumber\\ &+ \sum_{h=0}^K \sum_{j=1}^m f_{t-h}\beta_{h,j}\phi_j(u),
            \end{align}      
and, from (\ref{approx_X_t}), the dynamical factor model on the scores $\chi_{t,j} $
\begin{align}
             \chi_{t,j} = \sum_{h=0}^K\beta_{h,j}f_{t-h} +\varepsilon_{t,j},\; t=1,...,n,\; j=1,...,m.
            \end{align}       

The parameters $\mathbf{f}$, $\boldsymbol{\beta}^m := (\beta_{h,j}) \in \mathbb{R}^{(K+1) \times m}$, $\boldsymbol{\alpha}=(\alpha_{j}) \in \mathbb{R}^{ m}$ are then estimated by minimizing the following least squares function:
\begin{align}\label{Equation:MSE} 
                &\text{MSE}\FBetaAlphaMUL \nonumber\\
                &=\dfrac{1}{nm}\sum_{t=1}^{n}\sum_{j=1}^{m}\left(\chi_{t,j} -\alpha_j -\sum_{h=0}^K f_{t-h}\beta_{h,j}\right)^2, \nonumber\\
                &= \dfrac{1}{nm}\sum_{t=1}^{n}\left\|\boldsymbol{\chi}_t - \boldsymbol{\chi}_t^{R,b}\FBetaAlphaMUL\right\|^2,        
            \end{align}\\
where $\boldsymbol{\chi}_t^{R,b}\FBetaAlphaMUL= (\alpha_j+\sum_{h=0}^K  f_{t-h} \beta_{h,j})_{j=1}^m \in \mathbb{R}^{m}$ is similar to the reconstruction in multivariate GDPC \citep{pena2016generalized} of the vector $\boldsymbol{\chi}_{t}=(\chi_{t,1},...,\chi_{t,m})^\top$. The values $\FBetaAlphaMUL$ which minimize the above least squares function is denoted as $\FBetaAlphaMULhat$ and then
\begin{align}
               \chi_{t,j}^{R,b}\FBetaAlphaMULhat = \hat\alpha_j +\sum_{h=0}^K \hat f_{t-h} \hat\beta_{h,j},\; j=1,...,m.
            \end{align}  
The minimization of equation (\ref{Equation:MSE_H}) can then be approximately solved by minimizing equation (\ref{Equation:MSE}) based on the truncation of the functional objects $X_t(.)$. Then the estimation of $\FBetaAlpha$ is $\FBetaAlphahat$ with  $\hat\alpha(u) =\sum_{j=1}^m \hat\alpha_j \phi_j(u)$, $\hat\beta_{h}(u) =\sum_{j=1}^m \hat\beta_{h,j} \phi_j(u)$. Then, the estimated reconstruction of $X_{t}(.)$ is 
\begin{align}
               X_{t}^{R,b}\FBetaAlphahat = \hat\alpha(u) +\sum_{h=0}^K \hat f_{t-h}\hat\beta_{h}(u).
            \end{align}    

\subsection{Computation of GDFPC}\label{sec:computation}      
We see that the truncated $\FBetaAlpha$ can be equivalently solved by minimizing the MSE loss function when we reconstruct the vector $\boldsymbol{\chi}_{t}=(\chi_{t,1},...,\chi_{t,m})^\top$. To compute $\FBetaAlphaMULhat$, we follow the approach of a two-step iterative algorithm in \cite{pena2016generalized}. That is,

\smallskip
\begin{enumerate}[label=\textbf{\arabic*.}, leftmargin=*]
    \item Compute the least-squares estimator of $\mathbf{f}$ given the matrix $\boldsymbol{\beta}^m$ and the vector $\boldsymbol{\alpha}$. Likewise, compute the least-squares estimators of $(\boldsymbol{\beta}^m_j, \alpha_j)$ for each $j$ given $\mathbf{f}$.

    \item Initialize the GDFPC as $\mathbf{f}^{(0)}$, and iteratively update (\ref{Equation:beta expression}) and (\ref{Equation:f expression}) until the convergence criterion (\ref{mse_criteria}) is satisfied for a prescribed tolerance $\varepsilon$.
\end{enumerate}
\smallskip

Given GDFPC $\mathbf{f}$, the coefficient matrix $\boldsymbol{\beta}^m$ and the vector $\boldsymbol{\alpha}$ can be readily computed by minimizing (\ref{Equation:MSE}). We rewrite (\ref{Equation:MSE}) as
            \begin{align}\label{Equation:MSE_modif}
                \text{MSE}\FBetaAlphaMUL
                &= \dfrac{1}{nm}\sum_{j=1}^{m}\left\|\boldsymbol{\chi}_{j} -\mathbf{F}(\mathbf{f})\tilde{\boldsymbol{\beta}}_j^m\right\|^2, 
            \end{align}
where $\boldsymbol{\chi}_{j} = (\chi_{1,j},\chi_{2,j},...,\chi_{n,j})^{\top}$, $\mathbf{F(f)}$ $\in \mathbb{R}^{n \times (K+2)}$, with the $t$-th row $(f_t,f_{t-1},...,f_{t-K}, 1)$. $\tilde{\boldsymbol{\beta}}_j^m =(\beta_{0,j},\beta_{1,j},...,\beta_{K,j}, \alpha_j)^\top$. Differentiating (\ref{Equation:MSE_modif}) with respect to $\tilde{\boldsymbol{\beta}}_j^m$ leads to
\begin{align}\label{Equation:beta expression}
               \begin{pmatrix}
                      \boldsymbol{\beta}^m_{j} \\ 
                      \alpha_{j}\\
                \end{pmatrix} = (\mathbf{F(f)}^\top \mathbf{F(f)})^{-1}\mathbf{F(f)}^\top \boldsymbol{\chi}_{j},
           \end{align}
where $\boldsymbol{\beta}^m_{j}, j=1,...,m$ are the rows of $\boldsymbol{\beta}^m$,  
On the other hand, given $\boldsymbol{\beta}^m$ and $\boldsymbol{\alpha}$, $\mathbf{f}$ can be computed by differentiating (\ref{Equation:MSE}) with respect to $f_t$, $t=1-K,...,n$. This leads to
\begin{align}\label{Equation:f_MSE}
    \sum_{j=1}^m \sum_{v=t \vee 1}^{n \wedge (t+K)}(\chi_{v,j}-\alpha_j-\sum_{h=0}^{K}f_{v-h}\beta_{h,j})\beta_{v-t,j} = 0,
\end{align}
where $a\vee b$ equals maximum of $a$ and $b$, and $a \wedge b$ equals minimum of $a$ and $b$. Rearrange (\ref{Equation:f_MSE}) such that
\begin{align}\label{Equation:f_MSE_2}
    &\sum_{j=1}^m \sum_{v=t \vee 1}^{n \wedge (t+K)}(\chi_{v,j}-\alpha_j)\beta_{v-t,j} \nonumber\\ &= \sum_{j=1}^m \sum_{v=(t \vee 1)}^{n \wedge (t+K)}\sum_{h=0}^{K}f_{v-h}\beta_{h,j}\beta_{v-t,j}.
\end{align}
Let the left hand side and right hand side of (\ref{Equation:f_MSE_2}) be $a_t(\boldsymbol{\alpha},\boldsymbol{\beta}^m)$ and $b_t(\mathbf{f},\boldsymbol{\beta}^m)$ respectively. Substituting $q = v-t$ in $a_t(\boldsymbol{\alpha},\boldsymbol{\beta}^m)$, we have
\begin{align}\label{MSE_at(alpha,beta)}
    a_t(\boldsymbol{\alpha},\boldsymbol{\beta}^m) = \sum_{j=1}^m \sum_{q=0 \vee (1-t)}^{(n-t) \wedge (K)}(\chi_{t+q,j}-\alpha_j)\beta_{q,j}.
\end{align}
\\
Let $\mathbf{a}(\boldsymbol{\alpha},\boldsymbol{\beta}^m) = (a_{1-K}(\boldsymbol{\alpha},\boldsymbol{\beta}^m), a_{2-K}(\boldsymbol{\alpha},\boldsymbol{\beta}^m),...,a_n(\boldsymbol{\alpha},\boldsymbol{\beta}^m))^\top$, and we express it as
\begin{align}\label{MSE_a(alpha,beta)}
    \mathbf{a}(\boldsymbol{\alpha},\boldsymbol{\beta}^m) = \sum_{j=1}^m \mathbf{C}_j(\alpha_j)\boldsymbol{\beta}^m_j
\end{align}
where $\mathbf{C}_{j}(\alpha_{j}) = (c_{j,t,q}(\alpha_{j}))_{1-K\leq t\leq n, 0\leq q\leq K} \in \mathbb{R}^{(n+K) \times(K+1)}$, $c_{j,t,q}(\alpha_{j}) = \chi_{t+q,j}-\alpha_j $ if $0 \vee (1-t) \leq q \leq (n-t) \wedge K$, and 0 otherwise.

Now, we substitute $q=v-h$ in $b_t(\mathbf{f},\boldsymbol{\beta}^m)$, and 
\begin{align}\label{MSE_bt(f,beta)}
b_t(\mathbf{f},\boldsymbol{\beta}^m) = \sum_{j=1}^m \sum_{v=(t \vee 1)}^{n \wedge (t+K)}\sum_{q=v-K}^{v}\beta_{v-q,j}\beta_{v-t,j}f_{q},
\end{align}\\
and let $\mathbf{D}_{j}(\boldsymbol{\beta}^m_{j})= (d_{j,t,q} (\boldsymbol{\beta}^m_{j}))_{1-K\leq t\leq n, 1-K\leq q\leq n} \in \mathbb{R}^{(n+K)\times(n+K)}$, where $d_{j,t,q}(\boldsymbol{\beta}^m_{j})= \sum_{v=(t \vee 1)}^{n \wedge (t+K)}\beta_{v-q,j}\beta_{v-t,j}$ if $t \vee 1 \leq q \leq (n-K)\wedge (t)$ and 0 otherwise.
We then have,
\begin{align}\label{MSE_b(f,beta)}
\mathbf{b}(\mathbf{f},\boldsymbol{\beta}^m) = \sum_{j=1}^m\mathbf{D}_{j}(\boldsymbol{\beta}^m_{j})\mathbf{f},
\end{align}
where $\mathbf{b}(\mathbf{f},\boldsymbol{\beta}^m) = (b_{1-K}(\mathbf{f},\boldsymbol{\beta}^m), b_{2-K}(\mathbf{f},\boldsymbol{\beta}^m),...,b_n(\mathbf{f},\boldsymbol{\beta}^m))^\top$. Let $\mathbf{D}(\boldsymbol{\beta}^m)= \sum_{j=1}^m\mathbf{D}_{j}(\boldsymbol{\beta}^m_{j})$. Combining (\ref{MSE_a(alpha,beta)}) and (\ref{MSE_b(f,beta)}), we have
\begin{align}\label{Equation:f expression}
   \mathbf{f} &= \mathbf{D}(\boldsymbol{\beta}^m)^{-1}\sum_{j=1}^{m}\mathbf{C}_{j}(\boldsymbol{\alpha})\boldsymbol{\beta}^m_{j},
\end{align}
We initialize the GDFPC by setting $\mathbf{f}^{(0)}$ such that its first $n$ entries correspond to the first ordinary principal component of $(\boldsymbol{\chi}_1, \boldsymbol{\chi}_2, \ldots, \boldsymbol{\chi}_n)^\top$, with the remaining $K$ entries set to zero. Then, $\FBetaAlphaMULhat$ can be computed with a two-step iterative process based on equations (\ref{Equation:beta expression}) and (\ref{Equation:f expression}):
\\~\\
\textbf{Step 1:} Based on (\ref{Equation:beta expression}), define $\boldsymbol{\beta}^{m(h)}_j$ and $\alpha_j^{(h)}$, for $1 \leq j \leq m$, 
\begin{align}
               \begin{pmatrix}
                      \boldsymbol{\beta}^{m(h)}_{j} \\ 
                      \alpha^{(h)}_{j}\\
                \end{pmatrix} = (\mathbf{F}(\mathbf{f}^{(h)})^\top \mathbf{F}(\mathbf{f}^{(h)}))^{-1}\mathbf{F}(\mathbf{f}^{(h)})^\top \boldsymbol{\chi}_{j}. \nonumber
           \end{align}
\textbf{Step 2:} Based on (\ref{Equation:f expression}), define $\mathbf{f}^{(h+1)}$ by,
\begin{align}
    \mathbf{f}^*&= \mathbf{D}(\boldsymbol{\beta}^{m(h)})^{-1}\sum_{j=1}^{m}\mathbf{C}_{j}(\boldsymbol{\alpha}^{(h)})\boldsymbol{\beta}^{m(h)}_{j},\label{step2_firsteq}\\
    \textbf{f}^{(h+1)} &= (n+K)^{1/2}\frac{\mathbf{f}^* -\bar{\mathbf{f}}^*}{\left\| \mathbf{f}^* -\bar{\mathbf{f}}^* \right\|}, \label{step2_secondeq} 
\end{align}
where $\bar{\mathbf{f}}^*$ is the mean of $\mathbf{f}^*$, and (\ref{step2_secondeq}) ensures the GDFPC is normalized to be identifiable. 
The two-step iterations is stopped when

\begin{align}\label{mse_criteria}
1 -\dfrac{\text{MSE}(\mathbf{f}^{(h+1)},\boldsymbol{\beta}^{m(h+1)},\boldsymbol{\alpha}^{(h+1)})}{\text{MSE}(\mathbf{f}^{(h)},\boldsymbol{\beta}^{m(h)},\boldsymbol{\alpha}^{(h)})} < \epsilon,
\end{align}
where $\epsilon$ is some value.

To calculate higher order GDFPC, the two-step iterative algorithm described above can be performed on the scores of the truncation of the residuals in (\ref{residuals}). We follow the approach of \cite{pena2016generalized} in selecting the optimal number of lags $K$. Given $K_{max}$, $k$ is chosen among $(0,...,K_{max})$ to be the value which minimizes some type of criterion. In particular, we used leave-one-out (LOO) cross-validation \citep{seber2003linear}
\begin{align}
    LOO_k =\frac{1}{nm}\sum_{j=1}^m\sum_{t=1}^n\frac{w_{jt}^2}{(1-h_{k,tt})^2}
\end{align}
where $w_{i,t}=\chi_{t,j}-\hat{\chi}_{t,j}$, $h_{k,tt}$ are the diagonal elements of the matrix $\mathbf{F(f)}(\mathbf{F(f)}^\top \mathbf{F(f)})^{-1}\mathbf{F(f)}^\top$.

\subsection{Consistency of GDFPC}
 Let $\psi_t^m=(\boldsymbol{\beta}^m)^{\top}F_{t}$.           
 It is readily shown with Theorem 1 from \cite{smucler2019consistency} that the $\chi^{R,b}_{t}\FBetaAlphaMULhat$ converges in mean square to $\psi^m_t$ and $\chi_{t}$ under the following assumptions A3 to A7 \citep{smucler2019consistency}. With this result, we also prove the mean-square convergence of $X^{R,b}_{t}\FBetaAlphahat$ to the common part of the functional factor model $Y_{t}$ and $X_{t}$. Throughout this section, we let $\chi, \hat{\chi},\mathcal{E}, \Psi, \hat{F}, F$  be the matrices with rows  $\chi_t^\top, \chi_t^{R,b}\FBetaAlphaMULhat ^\top, \varepsilon_t^\top, (\psi_t^{m})^\top, \hat{F}_t^\top$, and $F_t^\top$,  $t=1,...,n$ respectively. 
\\~\\
         \textbf{Assumptions:} \vskip 1em
         \noindent \textbf{A3.} 
          $\psi^m_{t}$ and $\varepsilon_{t}=(\varepsilon_{t,1},...,\varepsilon_{t,m})$ represents a zero-mean, second order $m$-dimensional stationary process that has a spectral density.  \\
          \textbf{A4.} 
          $F_{t}$ is a second order stationary process. $F_{t}$ and $\varepsilon_{t}$ are uncorrelated for all $t$; Let $\Sigma^{\chi}, \Sigma^{\varepsilon}$ be the covariance operator of $\chi_t$  and $e_t$ each with finite maximum of the eigenvalues. \\
         \textbf{A5.} 
            $\lim_{n\to\infty} \beta^m (\beta^m)^{\top}/m = S$, where $S$ is positive definite. \\
         \textbf{A6.} 
            $\limsup_{m} \sum_{u\in\mathbb{Z}}[E\{\varepsilon^{\top}_{t}\varepsilon_{t+u}\}/m]^2 < \infty$. \\
         \textbf{A7.} 
           $\limsup_{m} (1/m) \sup_{t,s\in \mathbb{Z}} \\
           \sum_{j=1}^{m}\sum_{i=1}^{m}|\text{cov}(\varepsilon_{t,i}\varepsilon_{s,i},\varepsilon_{t,j}\varepsilon_{s,j})| < \infty$.\\
         \textbf{A8.}  there exist $\zeta>1$, $\tau>\zeta/2+1$,   $|\chi_{t,j}| = O\left(j^{-\tau}\right)$.
\\~\\

Assumptions \textbf{A3}-\textbf{A7} are similar to the assumptions used by \citet{stock2002forecasting,smucler2019consistency}. Assumptions \textbf{A3}-\textbf{A4} ensure the uniqueness of the decomposition, $\chi_t = \psi^m_t + \varepsilon_t$ \citep{forni2015dynamic}. Assumption \textbf{A5} is a standard assumption in ordinary least square estimation. Assumption \textbf{A6} allows for serial correlations in $\varepsilon_t$, and assumption \textbf{A7} constrains the size of fourth moments of $\varepsilon_t$ \citep{stock2002forecasting}. Assumption \textbf{A8} is usual in FDA and may be fulfilled when using an eigen-basis and gaussian processes, orthogonal polynomials or Fourier functions and a smooth condition of the functional objects \citep{cai2006prediction}. For example, if $\beta_h$ is $\tau$-times continuously differentiable function with respect to $m$ suitable basis functions it is of the order of at most $m^{-2 \tau}$. 
\\~\\
\begin{theorem}[]\label{thm1}
Assume that A1-A7 hold. Let the common part of a dynamic factor model be defined by $\psi_t^m=(\boldsymbol{\beta}^m)^{\top}F_{t}$, where $\boldsymbol{\beta}^m := (\beta_{h,j}) \in \mathbb{R}^{(K+1) \times m}$ and $F_{t}=(f_{t},\cdots,f_{t-K})^{\top}$. Secondly, let $\chi^{R,b}_{t}\FBetaAlphaMULhat$ be the reconstruction of $\chi_{t}$ with the estimated $\FBetaAlphaMULhat$. Then, as $n \to \infty$ and $m \to \infty$, 
\begin{align}
          \dfrac{1}{nm}\sum_{t=1}^{n}\left\| \psi^m_{t}-\chi^{R,b}_{t}\FBetaAlphaMULhat \right\|^2 = O_{P}\left(\dfrac{1}{m^{1/4}} \right), \\
          \dfrac{1}{nm}\sum_{t=1}^{n}\left\| \chi_{t}-\chi^{R,b}_{t}\FBetaAlphaMULhat \right\|^2 = O_{P}\left(\dfrac{1}{m^{1/2}} \right).
        \end{align} 
\end{theorem}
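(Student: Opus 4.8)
The plan is to recognize that the truncation has already turned the functional model into the finite, $m$-dimensional dynamic factor model $\chi_t = \psi_t^m + \varepsilon_t$, with common part $\psi_t^m = (\boldsymbol{\beta}^m)^\top F_t$ and idiosyncratic part $\varepsilon_t$, and that $\chi^{R,b}_{t}\FBetaAlphaMULhat$ is \emph{by construction} the first multivariate GDPC with $K$ lags for the panel $\{\chi_{t,j}\}$. Consequently the whole statement should follow by invoking Theorem~1 of \cite{smucler2019consistency} for this panel; no new limiting argument is needed beyond checking that our hypotheses are exactly its hypotheses. First I would record this identification explicitly, in particular observing that the scalar factor $f_t$ together with its $K$ lags, assembled into $F_t=(f_t,\dots,f_{t-K})^\top$, is the $(K+1)$-dimensional static reindexing of a model driven by a \emph{single} dynamic factor, which is precisely the class for which \cite{smucler2019consistency} prove consistency. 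This also clarifies why A5 concerns a $(K+1)\times(K+1)$ limit matrix $S$.

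The substantive step is the hypothesis matching, which I would carry out one assumption at a time. A3 supplies that $\psi_t^m$ and $\varepsilon_t$ are zero-mean, second-order stationary with a spectral density; A4 supplies the orthogonality of $F_t$ and $\varepsilon_t$ and the boundedness of the largest eigenvalues of $\Sigma^{\chi}$ and $\Sigma^{\varepsilon}$; A5 is the pervasiveness/identification condition $\boldsymbol{\beta}^m(\boldsymbol{\beta}^m)^\top/m \to S$ with $S$ positive definite, which produces the separation between the diverging common eigenvalues and the bounded idiosyncratic spectrum; A6 controls the serial dependence of $\varepsilon_t$; and A7 bounds their fourth-order cross-moments. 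Together A3--A4 also secure uniqueness of the decomposition $\chi_t = \psi_t^m + \varepsilon_t$ in the sense of \cite{forni2015dynamic}, which is what lets the first GDPC identify the common part rather than an arbitrary rotation of it. (Note that A8 is \emph{not} needed here; it is reserved for the later passage back to the functional object $X_t$.)

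With the hypotheses in place I would apply Theorem~1 of \cite{smucler2019consistency} to the panel $\{\chi_{t,j}:1\le t\le n,\,1\le j\le m\}$, whose two conclusions are exactly the two displayed bounds: the normalized reconstruction error is $O_P(m^{-1/4})$ against the common part $\psi_t^m$ and $O_P(m^{-1/2})$ against the observed scores $\chi_t$. Since the theorem is being used essentially as a black box, the only bookkeeping required at this stage is to confirm that our normalization (the factor $1/(nm)$ in \eqref{Equation:MSE}, together with the mean-removed, norm-rescaled $\mathbf{f}$) coincides with theirs, so that the rates transfer verbatim. I expect the genuine obstacle to lie entirely in the assumption translation under a \emph{growing} coordinate dimension $m=m_n=o(n)\to\infty$: one must check that the eigenvalue separation encoded in A5, and the moment controls A6--A7, hold uniformly in $m$ so that the signal/idiosyncratic gap persists in the joint limit, and that the $(K+1)$-factor reindexing keeps $\mathbf{F}(\mathbf{f})$ of full column rank so that the alternating least-squares maps \eqref{Equation:beta expression}--\eqref{Equation:f expression} remain well defined throughout. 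Once uniformity in $m$ is secured, the remainder of the argument is a direct citation rather than a fresh computation.
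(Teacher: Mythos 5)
Your proposal is correct and takes essentially the same approach as the paper: the paper's appendix proof is itself just an application of Theorem 1 of Smucler (2019) to the truncated score panel $\{\chi_{t,j}\}$, with the main steps (the spectral-norm bound on $\mathcal{E}$ in Lemma 1, the minimizing-property inequality, and the low-rank/SVD bound on the cross term $\langle \mathcal{E},\hat{\chi}-\Psi\rangle$) reproduced only ``for the purpose of completeness,'' and with the condition $m=o(n)$ turning the rates into pure powers of $m$. Your reduction and hypothesis matching (A3--A7, with A8 correctly reserved for the passage back to $X_t$ in Theorem 2) is exactly the reduction the paper performs, so the only difference is that you cite the multivariate result as a black box where the paper transcribes its proof.
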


In the following theorem we supposed that  $\{\phi_j\}$ is the eigenbasis functions of the operator $C(.,.)$.

\begin{theorem}[]\label{thm2}
Assume that A1-A8 hold. Let the common part of a functional dynamic factor model be defined by $Y_t=(\boldsymbol{\beta})^{\top}F_{t}$, where $\boldsymbol{\beta} := (\beta_{0}(.),...,\beta_{K}(.))$ is a vector of $K+1$ factor loading curves and $F_{t}=(f_{t},\cdots,f_{t-K})^{\top}$. Secondly, let $X^{R,b}_{t}\FBetaAlphahat$  be the reconstruction of $X_{t}$ with the estimated $\FBetaAlphahat$. Then, as $n \to \infty$ and $m =n^{\frac{1}{\zeta+2\tau}}$, 
\begin{align}
         \frac{1}{n}\sum_{t=1}^{n}\left\| X_t-X^{R,b}_{t}\FBetaAlphahat \right\|^2_\mathcal{H} = O_{P}\left( \dfrac{1}{n^{1/(2(\zeta+2\tau))}} \right).
        \end{align}
\end{theorem}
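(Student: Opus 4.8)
The plan is to reduce the functional reconstruction error to the finite-dimensional score reconstruction error already controlled by Theorem~\ref{thm1}, using Parseval's identity on the orthonormal basis $\{\phi_j\}$, and then to absorb the truncation tail via Assumption~\ref{A8}. First I would observe that, by construction, the estimated reconstruction lives in the span of the first $m$ basis functions: since $\hat\alpha(u)=\sum_{j=1}^m\hat\alpha_j\phi_j(u)$ and $\hat\beta_h(u)=\sum_{j=1}^m\hat\beta_{h,j}\phi_j(u)$,
\begin{align}
X_t^{R,b}\FBetaAlphahat = \sum_{j=1}^m\Bigl(\hat\alpha_j+\sum_{h=0}^K\hat f_{t-h}\hat\beta_{h,j}\Bigr)\phi_j(u)=\sum_{j=1}^m\chi_{t,j}^{R,b}\FBetaAlphaMULhat\,\phi_j(u), \nonumber
\end{align}
whereas $X_t$ expands over the full basis, $X_t(u)=\sum_{j=1}^\infty\chi_{t,j}\phi_j(u)$.

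Because $\{\phi_j\}$ is orthonormal and the reconstruction carries no mass beyond index $m$, Parseval's identity splits the error with no cross terms into a coefficient-matching part over $j\le m$ and an orthogonal tail over $j>m$:
\begin{align}
\bigl\|X_t-X_t^{R,b}\FBetaAlphahat\bigr\|^2_\mathcal{H} = \sum_{j=1}^m\bigl(\chi_{t,j}-\chi_{t,j}^{R,b}\FBetaAlphaMULhat\bigr)^2 + \sum_{j>m}\chi_{t,j}^2 = \bigl\|\chi_t-\chi_t^{R,b}\FBetaAlphaMULhat\bigr\|^2+\sum_{j>m}\chi_{t,j}^2. \nonumber
\end{align}
I would then average over $t$ and divide by $m$, treating the two pieces separately. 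The first piece is exactly the quantity in the second rate of Theorem~\ref{thm1}, so under A3--A7 we have $\tfrac{1}{nm}\sum_{t=1}^n\|\chi_t-\chi_t^{R,b}\FBetaAlphaMULhat\|^2=O_P(m^{-1/2})$. For the tail, Assumption~\ref{A8} gives $\sum_{j>m}\chi_{t,j}^2=O(m^{-2\tau})$ with $\tau>1$, which taken uniformly in $t$ yields
\begin{align}
\frac{1}{nm}\sum_{t=1}^n\sum_{j>m}\chi_{t,j}^2 = \frac{1}{m}\,O\!\left(m^{-2\tau}\right)=O\!\left(m^{-(2\tau+1)}\right). \nonumber
\end{align}

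Since $\tau>1$ forces $2\tau+1>3>\tfrac12$, the truncation tail is $o(m^{-1/2})$ and is dominated by the first piece; summing the two contributions gives the claimed rate $O_P(m^{-1/2})$. I expect the main obstacle to lie not in the arithmetic of the rates but in cleanly justifying the two structural points underpinning the split: that $X_t^{R,b}\FBetaAlphahat$ genuinely carries no component beyond the $m$-th basis function (so Parseval produces an orthogonal tail with no interaction with the Theorem~\ref{thm1} term), which follows from the truncated definitions of $\hat\alpha$ and $\hat\beta_h$; and that the pointwise bound of Assumption~\ref{A8} survives the averaging over $t$ in the appropriate ($O_P$ or in-expectation) sense. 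Once these are secured, the result is an immediate consequence of Theorem~\ref{thm1} together with the negligibility of the basis-truncation error.
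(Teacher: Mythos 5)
Your proposal is correct and follows essentially the same route as the paper's proof: Parseval's identity splits the functional error into the coefficient error over $j\le m$ (handled by the second rate in Theorem~\ref{thm1}) plus the orthogonal tail over $j>m$ (handled by Assumption~\ref{A8}), with the tail negligible since $\tau>1$. Your tail arithmetic, giving $O\bigl(m^{-(2\tau+1)}\bigr)$ after averaging over $t$ and dividing by $m$, is in fact cleaner than the paper's stated $o\bigl(\tfrac{1}{nm^{1+2\tau}}\bigr)$, which carries a spurious factor of $1/n$.
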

The proofs of Theorem 1 and Theorem 2 are in Appendix \ref{appendix}.

\section{Simulation Studies}

        We perform simulation studies to compare the performance of GDFPCA with that of FPCA and DFPCA \citep{hormann2015dynamic} under three different data-generating mechanisms: a first-order functional autoregressive process (FAR(1)), Wiener process, and a dynamic factor model. Our goal is to evaluate how GDFPCA performs when the data arise from a stationary model, FAR(1); a non-stationary model, Wiener process; and a dynamic factor model \citep{forni2009opening}. 
        
        We set the number of lags for DFPCA to 10 for the estimation of the spectral density operator. For GDFPCA, the lag order for the score process is selected using a data-driven procedure described in Section~\ref{sec:computation}, with the maximum lag fixed at 10. 

        For each simulated setting, PCs are computed under all three PCA frameworks, and the functional series is approximated using the corresponding components. The number of principal components (PC), $p$, is selected such that the GDFPCA approximation explains at least 80\% of the total variance of the functional time series. To ensure comparability across methods, the same value of $p$ is subsequently used for FPCA and DFPCA when approximating the series.
        Each simulation setting is simulated 100 times. We evaluate approximation accuracy using the normalized mean squared error (NMSE).

        \subsection{Simulation results for a stationary first order functional autoregressive FAR(1) process}\label{Section: Stationary FAR}
        
        We generate the FAR(1) process: $X_{n+1} = \psi(X_{n})+\epsilon_{n+1}$ 
        using the package \texttt{freqdom.fda} \citep{hormann2015dynamic}. 
        The simulation process of FAR(1) was performed in a finite dimension $d$:
        \begin{align}\label{eq.FAR}
            \langle X_{n+1},v_{j} \rangle 
              &= \langle \psi(X_{n}),v_{j} \rangle + \langle \epsilon_{n+1},v_{j} \rangle \nonumber \\  
              &= \langle \psi(\sum_{i=1}^{\infty} \langle X_{n},v_{i} \rangle v_{i}),v_{j} \rangle + \langle \epsilon_{n+1},v_{j} \rangle \nonumber \\
              &\approx \sum_{i=1}^{d} \langle X_{n},v_{i} \rangle \langle \psi(v_{i}),v_{j} \rangle + \langle \epsilon_{n+1},v_{j} \rangle,
            \end{align}        
        where $(v_{i}), i \in \mathbf{N}$ are the Fourier basis functions on [0,1]. Leaving $\mathbf{X}_{n} =(\langle X_{n}, v_{1}\rangle, ..., \langle X_{n}, v_{d}\rangle)'$ and $\boldsymbol{\epsilon}_{n} = (\langle \epsilon_{n+1}, v_{1}\rangle,...\langle \epsilon_{n+1}, v_{d}\rangle)'$, the first $d$ Fourier coefficients of $X_{n}$ can be approximated as the equation VAR(1) $\mathbf{X}_{n+1} =  \mathcal{B}\mathbf{X}_{n} + \boldsymbol{\epsilon}_{n}$, where $\mathcal{B} =(\langle \psi(v_{i}), v_{j} \rangle: 1 \leq i,j \leq d)$ \citep{hormann2015dynamic}. We use the default option of $\mathcal{B} = \kappa\mathbf{G} / 2||\mathbf{G}||$ where $\mathbf{G}_{ij} = \exp{-(i+j)}$ where $\mathbf{G}_{ij} \rightarrow 0$ as $i,j \rightarrow \infty$. 
        
        We consider the simulation settings with $d = 15, 31, 45$, $\kappa = 0.3, 0.6, 0.9$, and a sample size of $n = 300$. The values of $d$ are chosen to satisfy the requirement that the number of Fourier basis functions be odd. The choice $n = 300$ is motivated by the observation that performance remains essentially unchanged for larger sample sizes. Across all configurations, as $\kappa$ increases, a larger number of principal components is needed for the approximated series with GDFPCA to capture at least 80\% of the variance in the simulated time series. In particular, we find that $p = 6, 10, 13$ is adequate for $\kappa = 0.3, 0.6, 0.9$, respectively.
        
        Figure \ref{boxplot_NMSE_FAR1} presents boxplots of the NMSE distribution for the three PCA frameworks. For small values of $d$ and $\kappa$, GDFPCA performs comparably to DFPCA and better than FPCA. As $d$ and $\kappa$ grow larger, GDFPCA achieves superior performance relative to the other methods.

            \begin{figure}[!ht]
                 \centering
                 \includegraphics[width=0.5\textwidth]{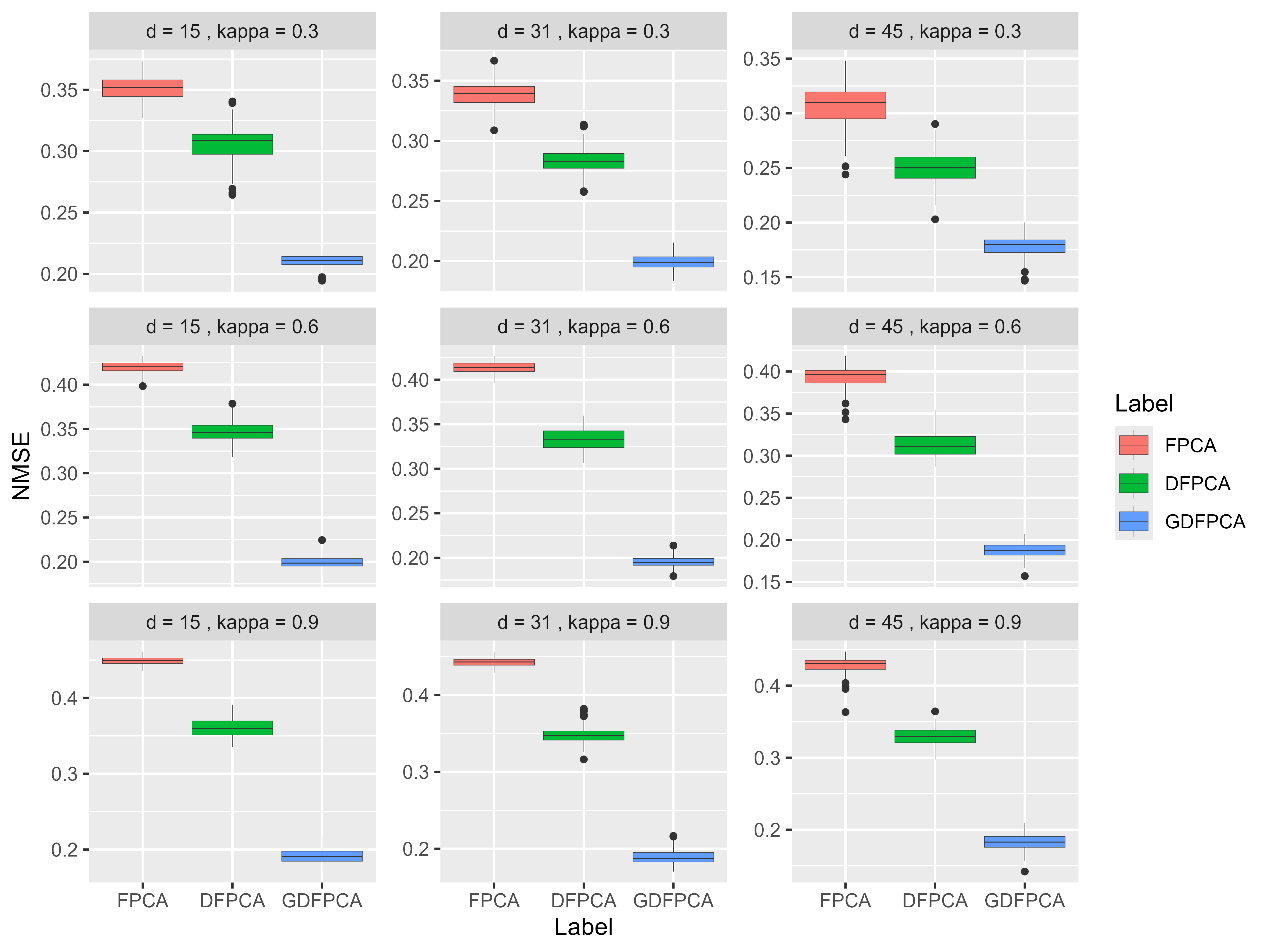} 
                 \caption{Boxplots of NMSE of 100 iterations of simulated stationary FAR(1) processes.}
                 \label{boxplot_NMSE_FAR1}
            \end{figure}            
            
\subsection{Simulation results for a non-stationary Wiener process}
         In this section, the Wiener process is considered. The functional data $X_1,X_2,\cdots,X_N$ are simulated using the truncated Karhunen-Lo\`eve expansion of the Wiener process $W_{t}$ \citep{bosq2000linear}, which is expressed as: 
                \begin{align}\label{eqn:wiener}
                     {W}_{t} &= \sqrt{2} \sum_{k=1}^{\infty}\xi_k\dfrac{\sin\left((k-\frac{1}{2})\pi t\right)}{(k-\frac{1}{2})\pi} \nonumber \\ &\approx \sqrt{2} \sum_{k=1}^{M}\xi_k\dfrac{\sin\left((k-\frac{1}{2})\pi t\right)}{(k-\frac{1}{2})\pi},
                \end{align}
where ${\xi_k}$ are independent Gaussian variables with a mean of zero with a variance of $4/{\pi^2(2k-1)^2}$, which corresponds to the $k$-th eigenvalue of the truncated expansion in (\ref{eqn:wiener}). The number of basis functions $M$ in (\ref{eqn:wiener}) is chosen to be 100. The number of observations $N=50,75,100$ and the number of observation points $T=50,100,200$ are considered.

A choice of $p=1$ is sufficient for the approximated series with GDFPCA to satisfy the 80\% explained variance threshold. The boxplots showing the NMSE distributions for the three PCA frameworks are presented in Figure \ref{boxplot_NMSE_WIENER}. Consistent with earlier findings, GDFPCA systematically achieves the lowest NMSE in all configurations, outperforming the other methods. 

                \begin{figure}[!ht]
                \centering
                    \includegraphics[width=0.5\textwidth]{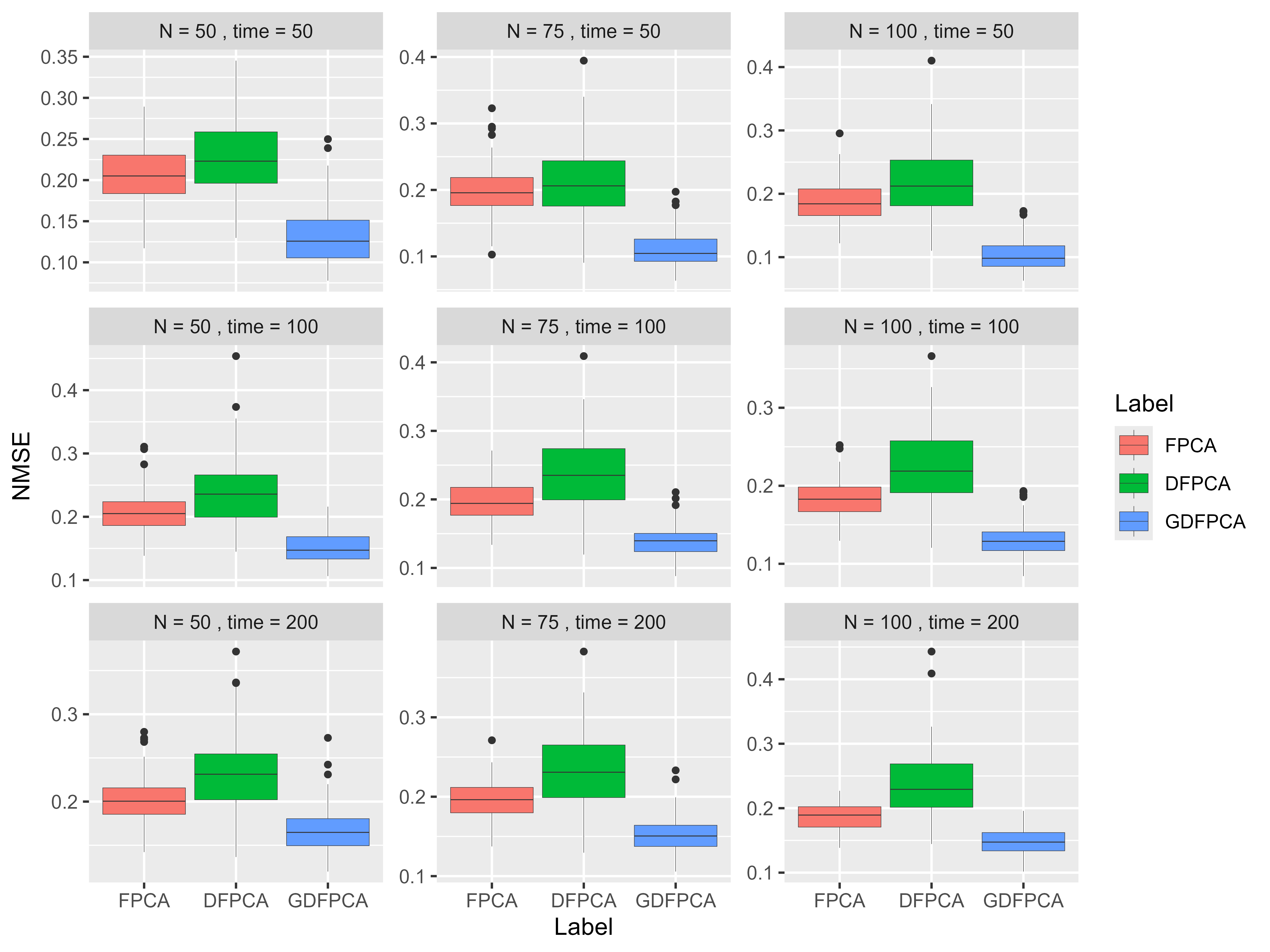}
                     \caption{Boxplots of NMSE over 100 Monte Carlo replications for simulated Wiener processes, comparing FPCA, DFPCA, and GDFPCA.}
                     \label{boxplot_NMSE_WIENER}
                \end{figure}

    \subsection{Simulation results for a dynamic factor model}

    In this section, we consider a dynamic factor model with a static factor representation \citep{forni2009opening}:
                \begin{align}
                   z_{it} &= \lambda_{i1}F_{1t} + \lambda_{i2}F_{2t} + ... +\lambda_{ir}F_{rt} + \epsilon_{it},\; \nonumber \\
                   \mathbf{F}_{t} &=\mathbf{D}\mathbf{F}_{t-1} + \mathbf{K}\mathbf{u}_{t}, \nonumber \\
                   i&=1,...,m,\; t=1,...,T, \nonumber 
                \end{align}
                where $\mathbf{F}_{t} = (F_{1t},F_{2t},...,F_{rt})^{'}$ and $u_{t} = (u_{1t},u_{2t},...,u_{qt})^{'}$, $\mathbf{D}$ is a matrix $r \times r$ and $K$ is a matrix $r \times q$, respectively. $u_{jt}$ and $\epsilon_{it}$ are standard normal random variables with i.i.d. $\lambda_{hi}$ and the entries of $\mathbf{K}$ are generated based on an independent and uniform distribution on the interval (-1,1), U(-1,1). Furthermore, the entries of $\mathbf{D}$ are generated according to section 4.1 in \citep{forni2017dynamic}. Firstly, each entry of $\mathbf{D}$ is generated based on an independent U(-1,1). Second, the resulting matrix from the first step is divided by its spectral norm to obtain the unit norm. Lastly, we multiply the resulting matrix by a random variable generated from U(-1,1). 
           
                The simulation outcomes show that choosing $p=6$ is enough to ensure that the explained variance reaches at least 80\% for all PCA frameworks. Figure \ref{boxplot_NMSE_FAC} presents the boxplots of the NMSE distributions for the reconstructions obtained from the three PCA frameworks. As the time of simulated series increases, GDFPCA surpasses the alternative methods by achieving the lowest NMSE.
                 \begin{figure}[!ht]
                     \centering
                     \includegraphics[width=0.5\textwidth]{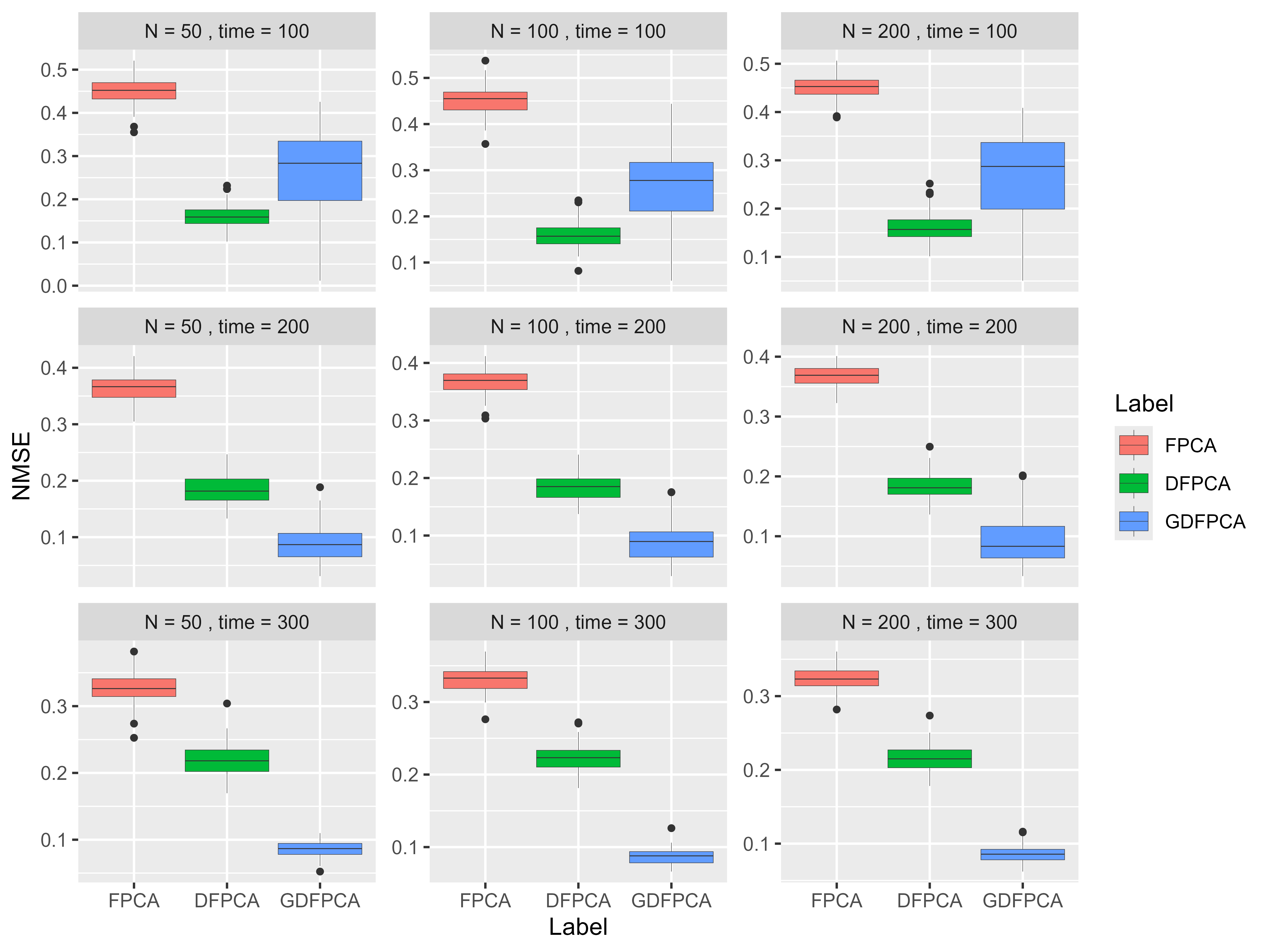} 
                     \caption{Boxplots of NMSE over 100 Monte Carlo replications for the dynamic factor model, comparing FPCA, DFPCA, and GDFPCA.}
                     \label{boxplot_NMSE_FAC}
                \end{figure}                       
                
\section{Applications to real data}
    \subsection{Particles with a diameter of 10 micrometers or less (\texttt{PM10})}
    In this case study, we apply GDFPCA to the \texttt{PM10} data set provided in the R package \texttt{freqdom.fda} \citep{freqdomfda}. The data consist of half-hourly measurements of particulate matter with an aerodynamic diameter below 10$\mu$m in ambient air, collected in Graz, Austria, over the period from October 1, 2010, to March 31, 2011. The original discrete observations have already been converted into functional data using 15 Fourier basis functions.    
    Figure \ref{pm10} displays the 175 daily trajectories from the \texttt{PM10} data set. In the same manner, we apply both FPCA and DFPCA to this data set. The proportions of variance explained by the three PCA-based methods are summarized in Table \ref{table_VAR_PM10}.
    
   We find that GDFPCA captures a proportion of explained variance comparable to that of DFPCA. Moreover, both GDFPCA and DFPCA approximation outperform those obtained via FPCA. This holds for all values of $p$ from 1 to 3.
   \begin{figure}[!ht]
         \centering
        \includegraphics[width=0.5\textwidth]{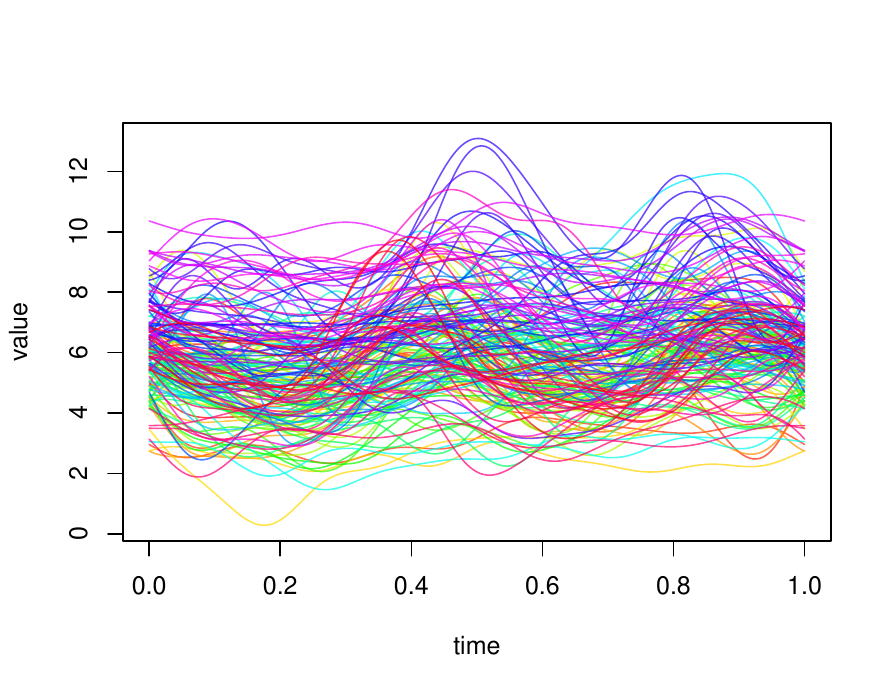}
       \caption{A plot of the data set PM10 in the form of functional time series.}
      \label{pm10}
   \end{figure}
    \begin{table*}[!ht]%
    \setlength\tabcolsep{0pt}
    \begin{tabular*}{\textwidth}{@{\extracolsep\fill}  l *5{d{1} @{\extracolsep\fill}}}
    \toprule
    Method  & \mc{PC1 (\%)} & \mc{PC2 (\%)} & \mc{PC3 (\%)} & \mc{Cumulative PC1-PC3 (\%)} & \mc{NMSE}
    \\ 
    \midrule
    FPCA                & \mc{72.3}   & \mc{10.29}  & \mc{6.1}  & \mc{88.7} &\mc{0.11} \\
    DFPCA               & \mc{77.9}   & \mc{7.3}  & \mc{5.1}   & \mc{90.4} &\mc{0.09}  \\
    GDFPCA              & \mc{81.1}   & \mc{8.5}  & \mc{4.1}  & \mc{93.8} &\mc{0.06} \\
    \bottomrule
    \end{tabular*}
    \captionsetup{singlelinecheck = false}
    \caption{Proportion of variance explained by the first three principal components, cumulative explained variance, and approximation NMSE for the PM10 daily functional data under FPCA, DFPCA, and GDFPCA.
}
    \label{table_VAR_PM10}
    \end{table*}

    \subsection{Standard \& Poor's 100 index (S\&P100)}
     We began by applying GDFPCA to 100 constituent stocks of the S\&P 100 as of September 2023. Daily closing prices from January 1, 2020, to January 1, 2021, were obtained from \url{http://www.investing.com}, and we computed the log-returns of these prices prior to conducting the GDFPCA. Because the stocks exhibit substantial variability, we first centered and scaled the data and then applied a smoothing procedure. GDFPCA was subsequently carried out on the smoothed series. For comparison, we also implemented standard FPCA and DFPCA on the same dataset. The proportions of variance explained by each of the three PCA frameworks are summarized in Table \ref{table_VAR_SP100}. Figure \ref{sp100} displays the 100 stock series represented as functional time series.
    We find that GDFPCA explains a proportion of the variance comparable to that of FPCA. Both GDFPCA and FPCA yield approximations that outperform those obtained with DFPCA, for all values of $p$ from 1 to 3. When $p=3$, GDFPCA achieves the highest proportion of explained variance. This outcome is anticipated, since the data are non-stationary, whereas DFPCA assumes stationarity \citep{hormann2015dynamic}.
   \begin{figure}[ht]
         \centering
        \includegraphics[width=0.5\textwidth]{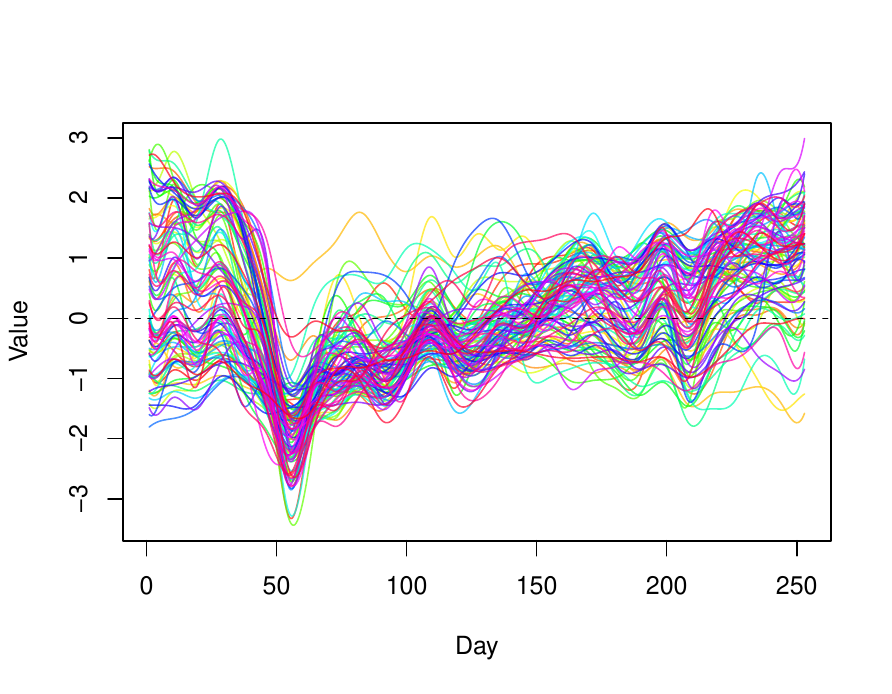}
       \caption{A plot of the centered, scaled and smoothed time series of the 100 stocks in S\&P100.}
      \label{sp100}
   \end{figure}
    \begin{figure}[h]
    \centering
         \includegraphics[width=0.5\textwidth]{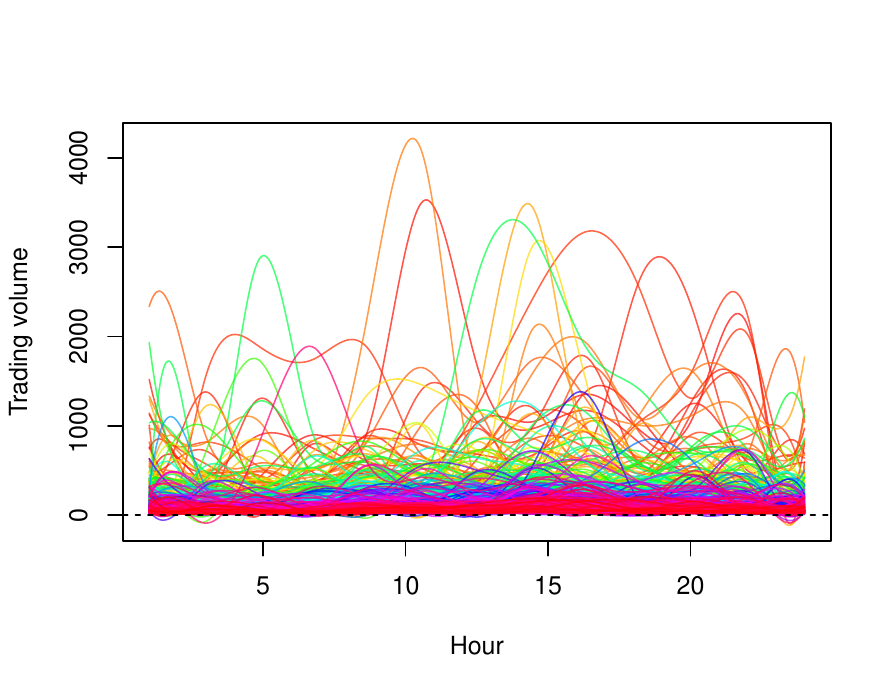}
         \caption{(a) A plot of the trading volumes of BTC in the form of functional time series.}
         \label{btc}
    \end{figure}
    \begin{table*}[ht]
    \setlength\tabcolsep{0pt}
    \begin{tabular*}{\textwidth}{@{\extracolsep\fill}  l *5{d{1} @{\extracolsep\fill}}}
    \toprule
    Method & \mc{PC1 (\%)} & \mc{PC2 (\%)} & \mc{PC3 (\%)} & \mc{Cumulative PC1-PC3 (\%)} & \mc{NMSE}
    \\ 
    \midrule
    FPCA                & \mc{51.5}      & \mc{14.7}   & \mc{7.1} & \mc{73.4} & \mc{0.266} \\
    DFPCA               & \mc{56.9}      & \mc{16.5}   & \mc{15.4} & \mc{81.6} & \mc{0.184} \\
    GDFPCA              & \mc{60.3}      & \mc{15.4}   & \mc{8.7} 
    & \mc{84.4} & \mc{0.156} \\
    \bottomrule
    \end{tabular*}
    \caption{Proportion of variance explained by the first three principal components, cumulative explained variance, and approximation NMSE for the S\&P100 stock data under FPCA, DFPCA, and GDFPCA.
    }\label{table_VAR_SP100}
    \end{table*}
    \begin{table*}[ht]%
    \setlength\tabcolsep{0pt}
    \begin{tabular*}{\textwidth}{@{\extracolsep\fill}  l *5{d{1} @{\extracolsep\fill}}}
    \toprule
    Method  & \mc{PC1 (\%)} & \mc{PC2 (\%)} & \mc{PC3 (\%)} & \mc{Cumulative PC1-PC3 (\%)} & \mc{NMSE}
    \\ 
    \midrule
    FPCA                & \mc{33.5}   & \mc{15.9}  & \mc{11.8}  & \mc{61.3} &\mc{0.38} \\
    DFPCA               & \mc{41.4}   & \mc{12.9}  & \mc{9.6}   & \mc{64.0} &\mc{0.35}  \\
    GDFPCA              & \mc{47.5}   & \mc{19.6}  & \mc{10.4}  & \mc{77.7} &\mc{0.22} \\
    \bottomrule
    \end{tabular*}
    \captionsetup{singlelinecheck = false}
    \caption{Proportion of variance explained by the first three principal components, cumulative explained variance, and approximation NMSE for the trading volume of Bitcoin under FPCA, DFPCA, and GDFPCA.
} \label{table_NMSE_BTC}
    \end{table*}
    \subsection{Trading Volume of Bitcoin}
    In this section, we evaluate and compare the performance of three PCA frameworks using Bitcoin trading volume data hourly from 2021. The dataset is obtained from \url{http://www.bitstamp.net}. It is organized as a 24 x 365 matrix, where each row corresponds to an hour of the day and each column corresponds to a calendar day. Figure \ref{btc} displays the smoothed Bitcoin trading volume over time. The proportions of variance captured by the three PCA frameworks are summarized in Table \ref{table_NMSE_BTC}.
    
    The findings show that DFPCA explains the largest proportion of variance when $p=1$. Nevertheless, as $p$ grows, GDFPCA yields greater gains in explained variance.

\section{Conclusions}
We introduced a functional extension of GDPCA \citep{pena2016generalized} as a data-driven approach to reduce dimensionality in functional time series. In this framework, we approximate the original functional time series by truncating the functional dynamic factor model.

The results from three simulation studies with different data-generating processes indicate that, under stationarity, GDFPCA and DFPCA achieve comparable approximation accuracy to the original functional time series, measured by explained variance and NMSE. In contrast, when the data-generating process is non-stationary, GDFPCA consistently outperforms both DFPCA and FPCA. This performance difference is theoretically consistent with the underlying assumptions of the competing methods. Spectral-based DFPCA relies on the existence of a well-behaved spectral density operator, which can become ill-behaved or singular at frequency zero in the presence of unit roots. By contrast, GDFPCA evaluates approximation error directly along the observed time path and therefore does not rely on strict stationarity or frequency-domain representations to extract dominant dynamic structure.

The applications to real datasets support the conclusions drawn from simulation studies. This indicates that GDFPCA is flexible  for dimensionality reduction across a wide range of functional data types, irrespective of whether the data are stationary or nonstationary.

The applicability of GDFPCA can be further broadened to improve the forecasting of both stationary and non-stationary functional time series models. Although most existing studies primarily address the stationary setting of functional time series \citep{aue2015prediction,shang2022dynamic,Chang2025}, it is anticipated that GDFPCA will also provide a useful representation for the analysis of non-stationary time series.

\backmatter

\bmhead{Data Availability}
Data will be made available upon reasonable request

\section*{Declarations}

\begin{itemize}
\item \textbf{Funding} This work was supported by the Universiti Malaya Research Excellence Grant (UMREG 2.0) 2024 [UMREG044-2024].

\item \textbf{Conflict of interest} The authors declare no potential conflict of interest

\item \textbf{Competing interests} The authors declare no competing interests,
\item \textbf{Author contribution} D.P. and S.D. conceived the study and supervised the project. I.M.D., T.H.K., S.D., and D.P. developed the theoretical framework. T.H.K., D.P., and S.D. developed the software. T.H.K. performed the simulations and wrote the original draft. D.P. acquired the funding. T.H.K. wrote the first draft of the manuscript. All authors reviewed and approved the final manuscript.
\end{itemize}

\begin{appendices}

\section{}\label{appendix}
Lemma 1 and Theorem 1 were given and proven by \cite{smucler2019consistency}. For the purpose of completeness, we outline the main steps of the proofs of Lemma 1 and Theorem 1 as follows: 
\\~\\
\noindent\textbf{Lemma 1:}
Assume A6 and A7 hold, then,
        $$
        \frac{\parallel \mathcal{E} \parallel}{\sqrt{nm}} = O_P\left(\frac{1}{m^{1/4}}\right)
        $$

\begin{proof}[Proof of Lemma 1]
 Consider $ v\in \mathbb R ^{m}$ with $\parallel v \parallel = 1$, then using Cauchy-Schwarz inequality we get: 
        $$
        \frac{v^T \mathcal{E}^T\mathcal{E}v}{nm} \leq \left( \frac{1}{m^2} \sum_{i,j = 1}^m \left( \frac{1}{n} \sum_{t = 1}^n \varepsilon_{t,i}\varepsilon_{t,j}\right)^2\right)^{1/2}.
        $$
        Thus we have:
        
        \begin{align}
            \left(\frac{\parallel \mathcal{E} \parallel^2}{nm}\right)^2 &= \left(\sup_{\parallel v \parallel = 1}\frac{v^T \mathcal{E}^T\mathcal{E}v}{nm} \right)^2 \nonumber\\ &\leq \frac{1}{m^2} \sum_{i,j = 1}^m \left( \frac{1}{n} \sum_{t = 1}^n \varepsilon_{t,i}\varepsilon_{t,j}\right)^2.
        \end{align}

        Using A6 and A7 we finally get: 
        
        $$
        \left(\frac{\parallel \mathcal{E} \parallel}{\sqrt{nm}}\right)^4 = O_P\left(\frac{1}{m}\right).
        $$    
\end{proof}
\begin{proof}[Proof of Theorem~{\upshape\ref{thm1}}]
 We have: 
        \begin{align}\label{eq:theorem1_inequality}
        &\dfrac{1}{nm}\sum_{t=1}^{n}\left\| \chi_{t}-\chi^{R,b}_{t}\FBetaAlphaMULhat\right\|^2 \nonumber\\
        &\leq 
        \dfrac{1}{nm}\sum_{t=1}^{n}\left\| \psi^m_{t} + \varepsilon_{t} -\chi^{R,b}_{t}(\hat{\mathbf{f}},\hat{\boldsymbol{\beta}^m},0) \right\|^2 \nonumber \\
        &=
        \dfrac{1}{nm}\sum_{t=1}^{n}\sum_{j=1}^m \varepsilon_{t,j}^2 \nonumber \\
        &=
        \dfrac{\parallel \mathcal{E}\parallel^2}{nm},    
        \end{align}
then, as $m = o(n)$ we get from Lemma 1:
        \begin{align}
        \dfrac{1}{nm}\sum_{t=1}^{n}\left\| \chi_{t}-\chi^{R,b}_{t}\FBetaAlphaMULhat \right\|^2 = O_{P}\left( \dfrac{1}{m^{1/2}} \right). \nonumber
        \end{align}
Notice that,
        \begin{align}\label{eq:theorem1_equation}
        &\dfrac{1}{nm}\sum_{t=1}^{n}\left\| \chi_{t}-\chi^{R,b}_{t}\FBetaAlphaMULhat \right\|^2 \nonumber \\ &= 
        \dfrac{1}{nm}\sum_{t=1}^{n}\sum_{j=1}^m \varepsilon_{t,j}^2 + \dfrac{1}{nm}\sum_{t=1}^{n}\left\| \psi^m_{t}-\chi^{R,b}_t\FBetaAlphaMULhat \right\|^2  \nonumber \\
        &- \frac{2\langle \mathcal{E}, \hat{\chi} - \Psi\rangle}{nm}.  
        \end{align}
By combining \eqref{eq:theorem1_inequality} and \eqref{eq:theorem1_equation}, we get: 
        $$
        \dfrac{1}{nm}\sum_{t=1}^{n}\left\| \psi^m_{t}-\chi^{R,b}_t\FBetaAlphaMULhat \right\|^2 \leq \frac{2\langle \mathcal{E}, \hat{\chi} - \Psi\rangle}{nm}.
        $$
Then, it is sufficient to prove that: 
        \begin{align}\label{eq:theorem1_proved_eq_1}
            \frac{2\langle \mathcal{E}, \hat{\chi} - \Psi\rangle}{nm} = O_{P}\left(\frac{1}{m^{1/4}}\right),
        \end{align}
Since $\chi^\top_t = (\beta^m)^\top F_t + \varepsilon_t^\top$, an application of triangular inequality on the corresponding matrices implies that $\left\| \chi \right\| \leq \left\| \beta^m \right\| \left\| F \right\| + \left\| \mathcal{E} \right\|$. Assumption A5 implies that $\left\| \beta^m \right\| = O(m^{1/2})$, and $F_{t}$ is second-order stationary, we have $\left\|  F \right\|$ = $O_{P}(n^{1/2})$. By Lemma 1, $\left\| \chi \right\|$ = $O_{P}((nm)^{1/2})$ and $\left\| \Psi \right\|$ = $O_{P}((nm)^{1/2})$. Since $\hat{\chi}$ is obtained by projecting $\chi$ on the space generated by the columns of $F_{t}$ and the vector with $n$ coordinates all equal to 1, $\left\| \hat{\chi} \right\|  \leq \left\| \chi \right\|$. Both $\Psi$ and $\hat{\chi}$ have rank bounded by $r = k+2$. Hence, it is enough to prove that for each fixed $C_1 > 0$, 
        $$
        \sup_{\parallel L \parallel \leq (nm)^{1/2}C_1, rank \leq r} \left| \dfrac{\langle \mathcal{E},L \rangle}{nm}\right|  = O_{P}\left(\dfrac{1}{m^{1/4}}\right).
        $$
We take $L$ with $\left\| L \right\| \leq (Tm)^{1/2}C_1$, and rank($L$) $\leq r$. Using the Singular Value Decomposition, we write $L = \sum_{l=1}^{r}\sigma_{l}u_{l}v_{l}^\top$, with $u_{l}$ and $v_{l}$ having unit norm. Then, 
    \begin{align}
        \left| \dfrac{\langle \mathcal{E},L \rangle}{nm}\right|
        &\leq 
        \dfrac{1}{nm}\sum_{l=1}^{r}\sigma_{l}\left| \langle \mathcal{E}, u_{l}v_{l} \rangle \right| \nonumber \\
        &\leq
        \dfrac{1}{nm}\max_{l \leq r}\sigma_{l}\sum_{l=1}^{r}\left| \langle \mathcal{E}, u_{l}v_{l} \rangle \right| 
        \nonumber\\
        &= \dfrac{1}{nm}\left\| L \right\|\sum_{l=1}^{r}\left| \langle \mathcal{E}, u_{l}v_{l} \rangle \right|. \nonumber
    \end{align}
Take any $u \in \mathbb{R}^T, v \in \mathbb{R}^m$ with unit norm. By using the cyclic property of the trace, $\left| \langle \mathcal{E}, u v^\top \rangle \right|$ = $\left| \text{tr}(u v^\top \mathcal{E}  ) \right|$ = $\left| \text{tr}(u^\top \mathcal{E} v ) \right| \leq \left\| \mathcal{E} \right\|$. With Lemma 1, we get:
        \begin{align}
            \left| \dfrac{\langle \mathcal{E}, L \rangle}{nm}\right|
            \leq
            \dfrac{r \left\| \mathcal{E} \right\| \left\| L \right\|}{nm} 
            \leq 
            \dfrac{r \left\| \mathcal{E} \right\| C_1}{(nm)^{1/2}} 
            = O_{P}\left(\dfrac{1}{m^{1/4}}\right). \nonumber
        \end{align}
Thus, by combining (\ref{eq:theorem1_inequality}), (\ref{eq:theorem1_equation}), (\ref{eq:theorem1_proved_eq_1}), and $m = o(n)$ we get:
        \begin{align}
        \dfrac{1}{nm}\sum_{t=1}^{n}\left\| \psi^m_{t}-\chi^{R,b}_{t}\FBetaAlphaMULhat  \right\|^2  = O_{P}\left(\dfrac{1}{m^{1/4}}\right).  
        \end{align}
\end{proof}

\begin{proof}[Proof of Theorem~{\upshape\ref{thm2}}]\label{proof_thm2}
Recall that $\{\phi_j\}$ are the eigenbasis used to expand the functional variables:
        $$
        X_t(u)=\sum_{j\ge 1} \chi_{t,j} \phi_j(u),\; X_{t}^{R,b}(u)= \sum_{j=1}^m \chi_{t,j}^{R,b} \phi_j(u).
        $$
Let $X^{R,b}_{t}\FBetaAlphahat$ be denoted here
        $X^{R,b}_{t}$ for simplicity of notation.
Then, we have,
        \begin{align}\label{eq:theorem_2_1st_equation}
          &\frac{1}{nm}\sum_{t=1}^{n}\left\| X_t-X^{R,b}_{t}\FBetaAlphahat \right\|^2_\mathcal{H}  \nonumber \\
          &= \frac{1}{nm}\sum_{t=1}^n (\chi_{t}-\chi_{t}^{R,b})^\top (\chi_{t}-\chi_{t}^{R,b}) + \frac{1}{nm}\sum_{t=1}^n \sum_{j > m} (\chi_{t,j})^2  
          \nonumber \\
          &= \frac{1}{nm} \sum_{t=1}^n \left\| \chi_{t}-\chi_{t}^{R,b} \right\|^2 +  O\left(\frac{1}{m^{2\tau}}\right)=O\left(\frac{1}{m^{1/2}}\right)
           \nonumber
        \end{align} 
This fulfils the proof by the help of Theorem {\upshape\ref{thm1}}.
\end{proof}

\end{appendices}


\bibliography{sn-bibliography}

\end{document}